\pgfplotsset{compat=1.17}
\tikzset{myarrow/.style={-{Stealth[length=2mm, width=2mm, sep=1pt]}}} 
\def\phase{0pt}
\def\RR{\mathbb{R}}
\def\NN{\mathbb{N}}
\newcommand\gui{\operatorname{GUI}}
\theoremstyle{plain}
\newtheorem{assumption}[figure]{Assumption}
\newtheorem{theorem}[figure]{Theorem}
\newtheorem{axiom}[figure]{Axiom}
\theoremstyle{definition}
\newtheorem*{proposed*}{Proposed Algorithm}
\newtheorem{remark}[figure]{Remark}
\newtheorem*{jlp_alg*}{Main Algorithm}
\newcommand\bref[2]{%
    \begingroup%
    \hyperlink{#1}{#2}%
    \endgroup}
\begin{document}

\title{Object classification in analytical chemistry via data-driven discovery of partial differential equations}

\author[1,2]{J.\ L.\ Padgett}
\author[4]{Y. Geldiyev}
\author[3]{S. Gautam}
\author[3]{W. Peng}
\author[3]{Y. Mechref}
\author[4]{A. Ibraguimov}

\affil[1]{\footnotesize Department of Mathematical Sciences, University of Arkansas, Fayetteville, Arkansas, USA}
\affil[2]{\footnotesize Center for Astrophysics, Space Physics, and Engineering Research, Baylor University, Waco, Texas, USA}
\affil[3]{\footnotesize Department of Chemistry and Biochemistry, Texas Tech University, Lubbock, Texas, USA}
\affil[4]{\footnotesize Department of Mathematics and Statistics, Texas Tech University, Lubbock, Texas, USA}

\date{\today}

\maketitle

\begin{abstract}
Glycans are one of the most widely investigated biomolecules, due to their roles in numerous vital biological processes. However, few system-independent, LC-MS/MS (Liquid chromatography tandem mass spectrometry) based studies have been developed with this particular goal. Standard approaches generally rely on normalized retention times as well as m/z-mass to charge ratios of ion values. Due to these limitations, there is need for quantitative characterization methods which can be used independently of m/z values, thus utilizing only normalized retention times. As such, the primary goal of this article is to construct an LC-MS/MS based classification of the glycans derived from standard glycoproteins and human blood serum using a Glucose Unit Index as the reference frame in the space of compound parameters. For the reference frame, we develop a closed-form analytic formula via the Green's function of a relevant convection-diffusion-absorption equation used to model composite material transport. The aforementioned equation is derived from an Einstein-Brownian motion paradigm, which provides a physical interpretation of the time-dependence at the point of observation for molecular transport in the experiment. The necessary coefficients are determined via a data-driven learning procedure. The methodology is presented in an abstractly and validated via comparison with experimental mass spectrometer data.
\end{abstract}

\tableofcontents

\section{Introduction}\label{intro}

The biological significance of glycans is evident from the numerous studies demonstrating their roles in living systems. These molecules alone, as well as in conjunction with other biomolecules, participate in important biological functions. For example, glycosylation is one of the major post-translational modifications \cite{1,2} and is known to mediate a broad range of biological processes such as cell recognition \cite{3}, cell signaling \cite{3,4}, immune response \cite{5}, and protein stability \cite{6}. Furthermore, aberrations in glycosylation patterns are found to be related to various diseases, including cancers \cite{3,7,8,9,10,11}.

Glycans display high structural complexity owing to the presence of diverse monosaccharide composition, different linkages, and various branching options \cite{22}. Tandem mass spectrometry has emerged as an effective technique for glycan structural studies \cite{1}. This technique in conjunction with liquid chromatography (LC) provides a powerful tool for studying molecular structures \cite{1,12,13,14,15}. Additionally, various derivatization techniques are employed to pursue sensitive and efficient investigation of the glycans. Some of the derivatization reagents include  2-aminobenzamide, procainamide, aminoxyTMT, RapiFluor-MS (RFMS) labeling, and iodomethane permethylation \cite{23}. Our particular method of choice for the current study is {\em permethylation} as it delivers several advantages over other derivatization techniques \cite{12,13,14,15}. In this technique, methyl groups replace the existing hydrogens, oxygen, and nitrogen atoms in a glycan structure. Permethylated glycans have increased hydrophobicity, which makes them ideal for reverse phase separation. It also prevents fucose (sugar) migration \cite{16,17} and sialic acid loss \cite{18}. Also, due to increased positive ion efficiency of glycans, ionization efficiency is improved, thereby enhancing the sensitivity \cite{19}.

Despite the availability of sensitive structural investigation techniques, inter-in\-stru\-ment, as well as inter-laboratory variations, in the data acquisition complicates the identification and characterization of glycans. This motivates the need for the development of universally applicable instrument, as well as laboratory, independent classification techniques. The Glucose Unit Index (GUI) is one such method, as it relies only on the relative retention time of sample molecules with respect to the glucose units. Ashwood et al.\ recently reported the retention time normalization of native glycans based on GUI as well as m/z  values \cite{20}. 

In this study, we develop a method which only utilizes GUI for characterizing permethylated glycans, independent of m/z values of the permethylated glycan structures. Dextrin, which is utilized as a reference frame in this study, consists of a mixture of oligosaccharides of D-glucose units which form linear chains consisting of either $\alpha$-$(1 \to 4)$ or $\alpha$-$(1\to 6)$ glycosidic bonds. The retention times of these glucose units are used to calculate the normalized retention times of the reduced and permethylated N-glycans derived from samples. The use of Dextrin as an internal standard allows for the elimination of inter-injection variations and improves the accuracy of the measurements. This approach then allows for the development of a mathematical model employing the LC-MS/MS-based data for efficient identification of permethylated N-glycans.

In particular, we employ a so-called data-driven methodology for constructing an associated partial differential equation (PDE) which allows for a straightforward classification procedure. The use of data-driven PDEs and other data-driven methodologies have recently garnered much attention in the literature due to their ability to efficiently learn in relation to dynamical systems and physical processes (see, for instance, \cite{bar2019learning,narasingam2018data,%
flandrin2004empirical,li2020robust,%
brunton2016discovering,rudy2017data,xu2019dl,%
raissi2017physics,xiong2019data,raissi2018hidden,%
berg2019data,schaeffer2017learning,rudy2019data,%
raissi2017machine,long2018pde} and the references therein). The coupling of such approaches with deep learning methods has allowed for the efficient and accurate handling of situations involving quite large data sets. Herein, we consider a modified approach which avoids standard regression methods in favor of a more mathematically informed process for determining the coefficients necessary for object classification. By modifying an approach employed by Einstein (e.g., \cite{einstein}) we are able to deduce more clearly the form of the undetermined PDE --- making our approach closer to a {\em supervised learning} method (with some distinct differences). Moreover, we demonstrate that performing a single learning procedure on a particular data set will allow for highly accurate classification of unknown data sets of a particular type. This, of course, motivates a wide array of novel questions related to learning procedures in both mathematics and the physical sciences.

This article is organized as follows.  \cref{sec:proposed} provides a heuristic description of the algorithm in order to provide the reader clarity regarding our goal and general methodologies. In \cref{einstein-paradigm}, we use a modification of the original Einstein argument for the classical development of standard Brownian motions (see \cite{einstein}, for example) to derive the primary equation of interest. This section also includes a more generalized procedure for the derivation which reduces the necessary assumptions on the system of interest. \cref{closed_form} builds on the material in \cref{einstein-paradigm}, allowing for the construction of a closed-form solution to our theoretical model --- thus, bypassing the need for numerical approximations and complicated learning procedures. We then use this model in \cref{classification_sec} to classify unknown samples via our proposed algorithm. This section also clearly outlines the parameters for the physical experiments carried out to produce the data set for classification. Finally, \cref{conclusions} provides some concluding remarks and also alludes to possible future endeavors related to the current work.


\section{Outline of proposed algorithm}\label{sec:proposed}

We briefly outline the ideas behind the newly proposed algorithm, below. Note that the algorithm will be more completely and rigorously described in \bref{alg1}{Main Algorithm} (see \cref{classification_sec}). For clarity, we allude to specific aspects of the experiment of interest, whose specific protocol is outlined in \cref{GUI-structure}.

\begin{proposed*}\hypertarget{proposed_alg}{}
Let $A$ denote the sample object of interest (e.g., a standard N-glycan --- cf.\ \cref{GUI-structure}).
\begin{enumerate}[label=(\roman *)]
\item 
\label{proposed_alg_1}
Inject into the sample, $A$, simple chemicals which will serve as ``markers'' in the classification process. For our purposes, we use glucose molecules of different types and denote these types by $M_i$, $i\in N^* = \{1, 2, \ldots, n^*\}$ (where $n^* \in \NN$). Each molecule $M_i$, $i\in N^*$, has a linear structure and (possibly) different lengths.
\item
\label{proposed_alg_2}
We then ``slowly'' transport sample $A$ through a short (approximately 10 cm) porous tube. We assume that the transport is one-dimensional and let this transport coincide with the positive $x$-axis (i.e., as in \cref{fig:Scheme of sensing}). (Note that the speed of transport was selected through auxiliary experiments in order to  maximize the high resolution of the signal-to-noise ratio, prior to our classification procedure.)
\item
\label{proposed_alg_3}
At the point of observation, which we denote by $x=L$ ($L < 10$ cm), we record all signals which are obtained from the mass spectrometer (with a particular emphasis on the observed peaks in the signals).  
\item
\label{proposed_alg_4}
In this situation it is assumed that for  all $i,j\in N^*$, such that $i\neq j$, it holds that $M_i$ and $M_j$ do not mix (i.e., do not undergo chemical bonding). 
We identify each $M_i$, $i\in N^*$, via the peaks in the signals of sample $A$ obtained from spectrometer. These  $M_i$, $i\in N^*$, will serve as our ``marker'' molecules. 

Note that the family of ``marker'' molecules $M_i$, $i\in N^*$, will be a subset of all molecules in any other sample of interest: $M_j$, $j\in N = \{1,2,\dots,n\}$ (where $n^* \le n \in \NN$). In other words:
\begin{equation}
\text {The set} \ (M_j)_{j\in N^*} \  \text{is a subset of} \ (M_j)_{j\in N}.
\end{equation}
\item
\label{proposed_alg_5}
We then classify the peaks that are located between those of the ``marker'' molecules in the sample $A$ via a so-called {\em Classification Index}. In the current article, this index  will depend only on two parameters --- which in turn depend only on the retrieval time (the time at which the signal has the largest peak). 
\item
\label{proposed_alg_6}
Using the information obtained by completing
\cref{proposed_alg_1,proposed_alg_2,proposed_alg_3,%
proposed_alg_4,proposed_alg_5}, we may then classify other samples of interest.
That is, for all other samples which have been injected with the same ``markers,'' we extract data regarding the $M_i$, $i\in N^*$, in these samples by matching the spectrometer signal peaks which are closest to the ``marker'' molecules in sample $A$.
\item
\label{proposed_alg_7}
We can then classify the remaining signal peaks in these  samples through associated so-called {\em data-driven PDEs}. This is accomplished by constructing appropriate diffusion and absorption coefficients, which will allow us to distinguish differences between the new samples and the original sample $A$ (see \bref{alg1}{Main Algorithm}). 
(Note that we use the terms {\em diffusiont} and {\em absorption} to mirror the description given in the thought experiment employing compound transport based on Einstein's paradigm of Brownian motion with absorption and drift --- see \cref{einstein-paradigm} for more details.) 
\end{enumerate}
\end{proposed*}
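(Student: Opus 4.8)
The plan is to justify the scheme as a chain of three reductions and then to close the loop empirically. First I would make rigorous the modeling claim implicit in \cref{proposed_alg_7}: that the transport of each compound past the observation point $x=L$ obeys a convection--diffusion--absorption equation. Following the Einstein random-walk argument (carried out in \cref{einstein-paradigm}), one posits a transition density for displacements over a short time increment, Taylor-expands in the spatial variable, and passes to the continuum limit; the imposed one-dimensional flow of \cref{proposed_alg_2} contributes a drift term while adsorption to the porous wall contributes a zeroth-order sink, yielding an equation of the form $\partial_t u = D\,\partial_{xx}u - v\,\partial_x u - \kappa u$ on the transport axis, with a concentrated (pulse) initial datum modeling the injection of \cref{proposed_alg_1}. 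I would also record the generalized derivation allowing $D$, $v$, and $\kappa$ to depend on $x$ and $t$, since this is what ultimately licenses using a single learning pass on $A$ for other samples.

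Second I would produce the closed-form Green's function promised in \cref{closed_form}. The standard route is to strip off the absorption by the substitution $u=e^{-\kappa t}w$, then strip off the drift by $w=\exp\!\big(\tfrac{v}{2D}\,x-\tfrac{v^2}{4D}\,t\big)\tilde w$, reducing the problem to the classical heat equation with its Gaussian fundamental solution; reassembling yields an explicit $G(x,t)$. Evaluating at the detector, the recorded signal becomes an explicit function $t\mapsto u(L,t)$, and I would show that it possesses a unique interior maximum --- the \emph{retrieval time} $t^{*}$ appearing in \cref{proposed_alg_5} --- characterized by the stationarity condition $\partial_t u(L,t^{*})=0$. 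This is a transcendental relation among $t^{*}$, $L$, and the coefficients; with the transport speed fixed by the auxiliary experiments mentioned in \cref{proposed_alg_2}, it determines the two remaining parameters, and I would verify that the relevant Jacobian is nonvanishing so that the parameter-to-observation map is locally invertible. The marker molecules $M_i$ of \cref{proposed_alg_4} then calibrate this map into the dimensionless Glucose Unit reference frame, rendering the Classification Index of \cref{proposed_alg_5} a well-defined function of $t^{*}$ alone.

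Third I would close the loop on \cref{proposed_alg_6,proposed_alg_7}: for a new sample carrying the same markers, matching the peaks nearest the $M_i$ identifies those markers, the intervening peaks inherit their retrieval times, and the closed-form model assigns each such peak a class; the discrepancy between the reconstructed diffusion and absorption coefficients and those of $A$ is precisely what separates the new sample from $A$. This would then be validated against the spectrometer data described in \cref{GUI-structure}.

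The hard part will be the \emph{inverse} step. I expect the main obstacle to be showing that the coefficient-to-retrieval-time map is genuinely injective on the physically relevant parameter range and --- more delicately --- that the reconstruction is stable under the measurement noise in the recorded peaks, so that a small error in locating $t^{*}$ does not blow up the inferred coefficients. A secondary difficulty is one of modeling rather than analysis: one must argue that the idealized constant-coefficient, one-dimensional PDE captures enough of the true LC-MS/MS transport for the index learned on $A$ to remain faithful on other samples --- which is exactly the role of the experimental validation carried out in \cref{classification_sec}.
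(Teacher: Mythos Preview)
The Proposed Algorithm is not a theorem carrying a proof in the paper; it is an informal outline that the subsequent sections flesh out into the Main Algorithm. Your plan mirrors that development almost exactly: the Einstein random-walk derivation you sketch is what \cref{einstein-paradigm} does, your Green's-function route via the substitutions $u=e^{-\kappa t}w$ and $w=\exp\bigl(\tfrac{v}{2D}x-\tfrac{v^2}{4D}t\bigr)\tilde w$ is the standard derivation of the formula the paper simply writes down in \cref{alfa} and \cref{CL}, and the retrieval-time relation you anticipate is precisely \cref{Dvst_max}, namely $D_i=L^2/(2t_{\max}^i)$. The empirical closure you describe is \cref{experiment-and classification}.

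Where you depart from the paper is in the inverse step you flag as the ``hard part.'' You propose to verify that the coefficient-to-observation map is injective, that its Jacobian is nonvanishing, and that the reconstruction is stable under noise in locating $t^*$. The paper does none of this. It sets $\gamma=0$ and $\omega=0$ for the markers, reads $D$ off directly from \cref{Dvst_max}, and for non-marker peaks assigns $D$ from the bracketing marker and computes $\omega$ by the explicit formula in \ref{step5} of the Main Algorithm; no invertibility or stability analysis is offered, and the only validation is the empirical agreement in \cref{table2,table3,table4}. So your plan is sound and aligned with the paper, but the rigor you anticipate on the inverse problem is additional to --- not part of --- what the paper actually supplies.
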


\noindent
The generic description provided by 
\cref{proposed_alg_1,proposed_alg_2,proposed_alg_3,%
proposed_alg_4,proposed_alg_5,proposed_alg_6,proposed_alg_7} in \bref{proposed_alg}{Proposed Algorithm} are meant to provide a rough blueprint of the method we employ, herein. However, there are numerous mathematical details needed before we can rigorously formulate the final algorithm.
A schematic depiction of \bref{proposed_alg}{Proposed Algorithm} (and \bref{alg1}{Main Algorithm}) is provided in \cref{fig:Scheme of sensing}, below.


\begin{figure}[htb!]
\centering
\subcaptionbox{Proposed experiment schematic.\label{f_1_a}}[.6\textwidth]{
\centering
\begin{tikzpicture}
\node at (-1.1,3) [left]{Glucose} ;
\draw[myarrow,ultra thick] (-1,3) -- (-0.2,3) ;
\node at (3,3.3) [above]{Experimental Tube};
\node[regular polygon,regular polygon sides=3,draw,fill] at (5.5,2.65) [below]{};
\node at (5.5,2) [below]{Sensor};
\draw[orange,double=orange!40,double distance=14pt] (0,3) |- (6,3);
\draw[line width=5pt,dashed,blue!80!black,dash pattern={on 12pt off 6pt},dash phase=\phase] (5,3) |- (5.5,3);
\draw[line width=8pt,dashed,blue!80!black,dash pattern={on 12pt off 6pt},dash phase=\phase] (5.5,3) |- (6,3);
\end{tikzpicture}
\vspace{10mm}
}
\subcaptionbox{Data from sensor.\label{f_1_b}}[.35\textwidth]{
\centering
\begin{tikzpicture}[framed,x=0.5cm,y=0.5cm]
  \draw[-stealth] (-0.5,0)--(8,0) node[right]{$t$}; 
  \draw[-stealth] (0,-0.5)--(0,5) node[right]{$C(L,t)$}; 

  \fill[red] (0,0)  circle[radius=2pt];
  \draw[red] (0,0)--(0.5,0) ;
  \fill[red] (0.5,0) circle[radius=2pt];
  \draw[red] (0.5,0)--(1,0) ;
  \fill[red] (1,0) circle[radius=2pt];
  \draw[red] (1,0)--(1.5,0) ;
  \fill[red] (1.5,0) circle[radius=2pt];
  \draw[red] (1.5,0)--(2,0) ;
  \fill[red] (2,0) circle[radius=2pt];
  \draw[red] (2,0)--(2.5,0) ;
  \fill[red] (2.5,0) circle[radius=2pt];
  \draw[red] (2.5,0)--(3,0) ;
  \fill[red] (3,0) circle[radius=2pt];
  \draw[red] (3,0)--(3.5,0) ;
  \fill[red] (3.5,0) circle[radius=2pt];
  \draw[red] (3.5,0)--(4,0) ;
  \fill[red] (4,0) circle[radius=2pt];
  \draw[red] (4,0)--(4.5,0) ;
  \fill[red] (4.5,0) circle[radius=2pt];
  \draw[red] (4.5,0)--(5,2) ;
  \fill[red] (5,2) circle[radius=2pt];
  \draw[red] (5,2)--(5.25,1.5) ;
  \draw[red] (5.25,1.5)--(5.5,5) ;
  \fill[red] (5.5,5) circle[radius=2pt];
  \draw[red] (5.5,5)--(5.75,1.5) ;
  \draw[red] (5.75,1.5)--(6,2) ;
  \fill[red] (6,2) circle[radius=2pt];
  \draw[red] (6,2)--(6.5,0) ;
  \fill[red] (6.5,0) circle[radius=2pt];
  \draw[red] (6.5,0)--(7,0) ;
  \fill[red] (7,0) circle[radius=2pt];
  
  \draw (5.5,0.2)--(5.5,-0.2) node [below]{$T_{\max}$};
\end{tikzpicture}
}
\caption{Schematic diagram of the experiment, with a signal peak and retention time, $T_{\max}$. Note that \cref{f_1_a} is a rough depiction of a generic sample passing through a tube and past a mass spectrometer sensor. Possible raw data obtained from this sensor is provided in \cref{f_1_b}. The value of $T_{\max}$ is determined by direct observation of the data obtained from the sensor.
\label{fig:Scheme of sensing}}
\end{figure}

As noted in \cref{proposed_alg_4} of \bref{proposed_alg}{Proposed Algorithm},
our method is based on the important assumption that for all $i,j\in N^*$, such that $i\neq j$, the molecules $M_i$ and $M_j$ are not mixing throughout the experiment.
This is formalized in the following assumption.

\begin{assumption} \label{non-mix}
For each $i,j\in N^*$ (cf.\ \cref{proposed_alg_1} of \bref{proposed_alg}{Proposed Algorithm}), such that $i\neq j$, it is assumed that $M_i$ and $M_j$ are not mixing.
That is, they do not interact to create novel chemical compounds.
\end{assumption}

\cref{non-mix} is a vital assumption in our proposed method. However, it is worth noting that \cref{non-mix} may not be valid in all physical experiments of interest. For our particular situation, empirical evidence (obtained from experiments employing the guidelines outlined in \cref{GUI-structure}) suggests that \cref{non-mix} does in fact hold. The process of molecular mixing requires further generalizations of Einstein's paradigm for Brownian motion and will be a focus of forthcoming research. 
More details on this process are provided in \cref{einstein-paradigm}.

\section{Einstein's paradigm for molecular transport in a tube} \label{einstein-paradigm}

In this section we will reformulate Einstein's model for Brownian motion to the case where glucose molecules are being transported in the tube filled with porous material. Einstein derived his seminal mathematical framework based upon his visual observations of the random jumps of pollen grains of the plant \textit{Clarkia pulchella} suspended in water (see \cite{einstein,Ibragim1}). He then provided a mathematical framework to describe the observed phenomena, which resulted in his model of classical Brownian motion. 
A less understood fact is that Einstein's approach can also be applied to many generic processes arising in physics, chemistry and engineering. This is the key observation employed to derive our novel classification algorithm. 

\subsection{Assumptions for Einstein's paradigm}

In order to employ Einstein's approach in novel situations, one must first understand the key principles which underpinned his work. 
These principles can be formulated into three main axioms, which we formulate into additional assumptions.

\begin{assumption}\label{assump2}
For each molecule of interest, say $M$, we assume that there exists a time interval $\tau$, which is small compared to the observable time intervals but large enough that the motions performed by $M$ during two consecutive time intervals of length $\tau$ can be considered as mutually independent events. 
\end{assumption}

\begin{assumption}\label{assump3}
Let $\cal M(\tau)$ to be the set of all possible lengths of non-colliding jumps associated to $M$ in the time interval $\tau$ (cf.~\cref{assump2}). We will say that $\Delta$ is a possible length of such ``free jumps'' if $\Delta \in  \cal M(\tau)$. 
\end{assumption}

\begin{assumption}\label{assump4}
All molecular interactions during such time intervals 
$\tau$ (cf.~\cref{assump2}) are restricted to absorption of the surrounding media --- which includes other molecules, the porous media, and all possible boundaries. 
\end{assumption}

\begin{remark}\label{remark1}
In general, the time interval of free jumps, $\tau$ (cf.~\cref{assump2}), the expected value of the length of the free jumps associated to $\tau$, which we denote by $\Delta_e \in \mathcal{M}(\tau)$, and the frequency of the free jumps of length $\Delta \in \mathcal{M}(\tau)$, which we denote by $\varphi(\Delta) \in [0,\infty)$, are key parameters in our approach and may depend on underlying properties of the involved molecules (such as rates of change) and the surrounding environment (such as the components of the mixtures and the associated porous media) (see, for example, \cite{Ibragim1} and the references therein).
\end{remark}

\noindent
We will use \cref{remark1} for the interpretation of the experimental data, herein.

\subsection{Derivation of associated equation with drift and absorption}\label{derivation1}

For the development of our PDE model, we employ the notation used in the original work of Einstein \cite{einstein}. 
Denote by $n \in \NN$ the total number of particles of a molecule $M$, presented in a unit volume, and let $\Delta \in \cal M(\tau)$ (cf.~\cref{assump3}). Let $dn \in \NN$ denote the number of particles located in an arbitrary interval of length $d\Delta$ experiencing a displacement of magnitude $\Delta$ in the time interval $\tau$ (cf.~\cref{assump2}).
This value can be expressed via the following equation
\begin{equation}\label{dn}
dn = n \varphi(\Delta) \, d \Delta  ,
\end{equation}
where $\varphi \colon \RR \to [0,1]$ is the frequency (or probability density) of free jumps with length $\Delta.$ 
Note that in his original work, Einstein assumed that $\varphi$ was an even function which differs from zero only for very small values of $\Delta$ (e.g., \cite[Page 13]{einstein}). 
\begin{remark}\label{rem6}
It is worth noting that Einstein's assumptions on $\varphi$ are natural assumptions in the particular case involving Brownian motion. However, it is important to observe that this assumption may not be reasonable for all physical phenomena.
\end{remark}

Based on \cref{rem6}, it should be clear that our intentions are to consider a more general case than that of Einstein's original work. Therefore, we must clearly indicate what assumptions we will employ (as an arbitrary probability density will be too general to allow for any meaningful data-driven learning).
To this end, we assume that there exists $\sigma \in [0,\infty)$ such that the function $\varphi$ satisfies
\begin{equation}\label{prob1}
\int_{-\infty}^{\infty} \varphi(\Delta) \, d\Delta = 1,
\end{equation}
which we refer to as the {\em whole universe axiom}, 
\begin{equation}\label{prob2}
\int_{-\infty}^{\infty} \Delta \varphi(\Delta) \, d\Delta = \Delta_e
\end{equation}
(cf.~\cref{remark1}),
which we refer to as the expected length of the free jumps, and
\begin{equation}\label{prob3}
\int_{-\infty}^{\infty} (\Delta-\Delta_e)^2 \varphi(\Delta) \, d\Delta = \sigma^2 
\end{equation}
(cf.~\cref{remark1}),
which we refer to as the {\em standard variance of the free jump lengths}.

Next, let $f \colon \RR \times [0,\infty) \to [0,\infty)$ be the function which represent the number of particles per unit volume present at position $x \in \RR$ at time $t \in [0,\infty)$. 
With this \cref{prob1}, \cref{prob2}, and \cref{prob3}, we may now formulate a crucial axiomatic conservation law.

\begin{axiom}\label{Einstein_conserv}
There exists a continuous function $F \colon \RR \to \RR$ such that the number of particles found at time $t + \tau$ (cf.~\cref{assump2}) between two planes perpendicular to the $x$-axis with abscissas $x$ and $x + \delta$ (where $\delta \in \RR$) is given by
\begin{equation}\label{Einstein_conserv_eq}
\int_x^{x+\delta} f(y, t+\tau)  \,dy =  \int_x^{x+\delta}\int_{-\infty}^{\infty} f(y + \Delta, t) \varphi(\Delta)\, d \Delta \, dy  +  \int_x^{x+\delta} \int_t^{t+\tau}F\bigl(f(y, s)\bigr)\,ds \, dy.
\end{equation} 
\end{axiom}

The second term in the right-hand-side of \cref{Einstein_conserv_eq} is a result of possible bonding and/or absorption with other molecules in the sample or with the porous media within the tube. In general, for \cref{Einstein_conserv_eq} to be well-defined, all one needs is that $F$ is finite and measurable on the codomain of the function $f$.
However, throughout this article, we will assume that there exists $\omega\in\RR$ such that for all $x\in\RR$, $t \in [0,\infty)$  this term is well-approximated by the linear function  $\omega \cdot f(x,t)$.
The conservation law given by \cref{Einstein_conserv_eq} is depicted schematically in \cref{fig:Schematic}, below.


\begin{figure}[!htb]
\centering
\begin{tikzpicture}
\draw[-,very thick] (-5,0) -- (5,0) ;
\draw[-,very thick] (-4,3) -- (4,3) ;
\filldraw[red] (0,3) circle (2pt) ;
\node[above] at (0,3) {$(x,t+\tau)$} ;
\draw[-,thick] (0,0.1) -- (0,-0.1) ;
\node[below] at (0,-0.1) {$(x,t)$} ;
\draw[-,thick] (3,0.1) -- (3,-0.1) ;
\node[below] at (3,-0.1) {$(x+\delta,t)$} ;
\draw[-,thick] (-3,0.1) -- (-3,-0.1) ;
\node[below] at (-3,-0.1) {$(x-\delta,t)$} ;
\node[right] at (-5,5) {Number of particles here is $f(x,t+\tau)$} ;
\draw[myarrow,ultra thick] (-0.5,4.7) [out=345, in=30] to (0.9,3.4) ;
\draw[myarrow,very thick,blue] (-3,1.8) -- (-0.3,2.8) ;
\filldraw[blue] (-3,1.8) circle (2pt) ;
\draw[myarrow,very thick,blue] (-2.5,0.5) -- (-0.1,2.7) ;
\filldraw[blue] (-2.5,0.5) circle (2pt) ;
\draw[myarrow,very thick,blue] (-0.3,1) -- (0,2.6) ;
\filldraw[blue] (-0.3,1) circle (2pt) ;
\draw[myarrow,very thick,blue] (1,0.8) -- (0.2,2.7) ;
\filldraw[blue] (1,0.8) circle (2pt) ;
\draw[myarrow,very thick,blue] (3,0.25) -- (0.3,2.8) ;
\filldraw[blue] (3,0.25) circle (2pt) ;
\end{tikzpicture}
\caption{Schematic representation of Einstein's Conservation Law given by \cref{Einstein_conserv_eq}. The parameter $\delta$ serves as a boundary for the spatial points which can directly influence the number of particles at $(x,t+\tau)$. \label{fig:Schematic}}
\end{figure}

\begin{remark}
In \cref{+-Deltae} we assume that $\varphi$ has compact support. This assumption is physical in nature, as the ensuing theoretical work follows in a similar fashion if $\varphi$ does not have compact support.
\end{remark}

Note that for all $x \in \RR$ it holds that
\begin{equation}\label{+-Deltae}
\int_{-\infty}^{\infty} f(x+ \Delta, t) \varphi(\Delta) \, d \Delta
= \int_{-\infty}^{\infty} \bigl[f(x+ \Delta, t)-f(x+\Delta_e,t) +f(x+\Delta_e,t) \bigr]\varphi(\Delta) \, d \Delta
\end{equation}
(cf.~\cref{remark1}).
Next, observe that the multi-dimensional Carath{\'e}odory Theorem (e.g., Bartle et al.\ \cite[Theorem 6.1.5]{Bartle}) ensures that there exist measurable functions $\psi^x, \psi^t \colon \RR \times [0,\infty) \to \RR$ such that for all $x \in \RR$, $t\in[0,\infty)$ it holds that
\begin{equation}\label{Caratheodory}
f(x, t+ \tau) - f(x+\Delta_e, t) = \tau \bigl[ \psi^t(x,t+\tau) \bigr] + \Delta_e \bigl[ \psi^x(x+\Delta_e,t) \bigr] ,
\end{equation}
where for all $x\in\RR$, $t\in[0,\infty)$ it holds that
\begin{equation}
 \psi^t(x,t+\tau)\approx \frac{\partial f(x,t)}{\partial t}
\qquad \text{and} \qquad 
 \psi^x(x+\Delta_e,t)\approx \frac{\partial f(x,t)}{\partial x}.
\end{equation}
Furthermore, applying Taylor's theorem to $f(x+\Delta, t)$, with respect to $x$, centered at the point $x+\Delta_e$, yields (under appropriate smoothness assumptions) that for all $x\in\RR$, $t\in[0,\infty)$ it holds that
\begin{equation}\label{Taylor}
\begin{split}
 f(x+\Delta, t) & = f(x + \Delta_e, t) + \sum_{k=1}^\infty \frac{(\Delta - \Delta_e)^k}{k!} \frac{\partial^k f}{\partial x^k}(x + \Delta_e,t) \\
& = f(x+\Delta_e, t)+ (\Delta-\Delta_e) \frac{\partial f}{\partial x}(x + \Delta_e,t) + \frac{(\Delta-\Delta_e)^2}{2!} \frac{\partial^{2} f}{\partial x^{2}}(x + \Delta_e,t) \\
& \qquad + \frac{(\Delta-\Delta_e)^3}{3!} \frac{\partial^{3} f}{\partial x^{3}}(x + \Delta_e,t) + \mathcal{O}\bigl( \lvert \Delta - \Delta_e \rvert^4 \bigr).
\end{split}
\end{equation}
Next, we assume that $\lvert \Delta - \Delta_e \rvert \ll 1$ and that for all $k \in \NN$ it holds that
\begin{equation}\label{eq:11}
f(x+\Delta_e,t) \approx f(x,t) \qquad \text{and} \qquad \frac{\partial^k f}{\partial x^k} (x+\Delta_e,t) \approx \frac{\partial^k f}{\partial x^k}(x,t).
\end{equation}
Combining this with \cref{prob1,prob2,prob3,+-Deltae,Taylor} (after disregarding the higher-order terms in \cref{Taylor} --- justified by the assumption that $\lvert \Delta - \Delta_e \rvert \ll 1$), and straightforward calculus demonstrates that for all $x\in\RR$, $t\in[0,\infty)$ it holds that the function $f$ satisfies 
\begin{equation}\label{eq:0}
\tau \frac{\partial f}{\partial t}(x,t) +  \Delta_e\frac{\partial f}{\partial x}(x,t) = \frac{\sigma^2}{2}\frac{\partial^{2} f}{\partial x^{2}}(x,t) +\tau \omega f(x,t).
\end{equation}
For convenience we will rewrite \cref{eq:0} in the form:
\begin{equation} \label{eq:1}
\frac{\partial f}{\partial t}(x,t) - D  \frac{\partial^{2} f}{\partial x^{2}}(x,t) +\gamma \frac{\partial f}{\partial x}(x,t) - \omega f(x,t) = 0,
\end{equation}
where for all $x\in\RR$, $t\in[0,\infty)$, $\tau \in \mathcal{M}(\tau)$ (cf.~\cref{assump2}) the diffusion, drift (convection), and absorption terms are given by
\begin{equation}\label{D-def}
\frac{\sigma^2}{2\tau}= D, \qquad \frac{\Delta_e}{\tau} = \gamma, \qquad\text{and}\qquad
\frac{1}{\tau}\int_t^{t+\tau}F\bigl(f(x, s)\bigr)\,ds  \approx \omega f(x,t),
\end{equation}
respectively.
The differential equation given in \cref{eq:1} is the well-known convection-diffusion-absorption equation which arises in the study of numerous physically relevant phenomena.
It is worth noting that \cref{eq:1} immediately follows from Einstein's original equation for Brownian motion (e.g., \cite[Equation (10)]{einstein}) if we set $\Delta_e=0$ (no drift) and $\omega =0$ (no absorption).

We conclude this subsection with some final remarks.
In the arguments above, we have assumed sufficient smoothness conditions in order to allow for all claims to hold {\em globally} (i.e., for all $t\in[0,\infty)$). However, this can easily be circumvented by constructing \cref{eq:1} {\em locally} (e.g., for some $c \in (0,\infty)$ with $t \in [0,c)$) and then ``gluing'' the results together. This approach is avoided, herein, for simplicity and ease of exposition.
Furthermore, the assumption that $\lvert \Delta - \Delta_e \rvert \ll 1$ is not an inherently restrictive assumption. This assumption loosely can be thought of as removing the possibility of so-called {\em long-range interactions} within the experiment. One can allow for such interactions, but the resulting PDE may involve non-local operators such as the fractional Laplacian (see, for instance, \cite{josh1,josh2,josh3,josh4} and the references therein).
Finally, we note that the assumptions in \cref{eq:11} are purely for convenience, as we can obtain a result similar to \cref{eq:1} through a simple re-scaling of the function $f$.

\begin{remark}
For the remainder of our study, we will assume that $\Delta$ , $\varphi$, and $\tau$ (cf.~\cref{assump2}) are independent of the function $f$. 
In general, these parameters can upon both the spatial and temporal variables, $x$ and $t$, respectively, as well as the underlying porous media. 
Moreover, in more general situations, these parameters can further depend on the dependent variables' derivatives.    
\end{remark} 

\begin{remark}
Herein, we assume that $\tau$ (cf.~\cref{assump2}) is the same for each of the types of molecules of interest. 
That is, for each $M_i$, $i\in\{1,2,\dots,n\}$, we assume that the associated $\tau_i = \tau$.
One can always find such a $\tau$ by simply choosing the minimum of the set of associated $\tau_i$, $i \in \{1,2,\dots,n\}$.
\end{remark}

\subsection{A remark on an alternative derivation}

We conclude \cref{einstein-paradigm} with an alternative derivation of \cref{eq:1}. This alternative derivation is not simply an exercise in pure mathematics, but rather, it provides justification that the proposed algorithm may be applied to a wider class of problems than originally expected. First, this alternative derivation helps to resolve a slight inconsistency in the derivation presented in \cref{derivation1}. That is, in \cref{derivation1} we simultaneously applied the Carath{\'e}odory theorem and Taylor's theorem to \cref{Einstein_conserv_eq}. There is nothing inherently wrong with this approach --- it simply seems strange. The approach presented below rectifies this concern. Next, this alternative derivation reduces the smoothness assumptions one imposes on the function $f$. In practice, it is often assumed that a function of interest is smooth enough to manipulate via expansions, but this assumption excludes numerous physically relevant situations. As such, we present a method for reducing the regularity assumptions needed to obtain \cref{eq:1}, which, in turn, increases the applicability of the proposed method.

To that end, we will derive \cref{eq:1} using only the Carath{\'e}odory theorem. 
Throughout this argument, we assume that for all $x \in \RR$, $t\in[0,\infty)$ it holds that $\frac{\partial^2 f}{\partial x^2}(x,t)$ is continuous and bounded (otherwise the strong form of the PDE \cref{eq:1} is not well-defined).
This implies that there exists $\alpha \in (0,2]$ such that for all $h, x \in \RR$, $t\in [0,\infty)$ satisfying $\lvert h \rvert \ll 1$ it holds that
\begin{equation}\label{2-d-dif}
\begin{split}
\frac{\partial^{2} f}{\partial x^{2}}(x,t) & = \frac{f(x+h,t)-2f(x,t)+f(x-h,t)}{h^2} + \mathcal{O}\bigl(\lvert h \rvert^\alpha \bigr)\\
& \approx \frac{f(x+h,t)-f(x,t)+f(x-h,t)-f(x,t)}{h^2}.
\end{split}
\end{equation}
Next, the Carath{\'e}odory theorem and \cref{2-d-dif} ensure the existence of $\psi_1 \colon \RR \times [0,\infty) \to \RR$, $\psi_{i,h} \colon \RR \times [0,\infty) \to \RR$, $i\in\{1,2\}$, $h\in\RR$, such that for all $h, x \in \RR$, $t\in [0,\infty)$ with $ \lvert h \rvert \ll 1$ it holds that
\begin{equation}\label{Car-1}
f(x+h,t) - f(x,t) = h \bigl[ \psi_{1,h}(x,t) \bigr] \quad \text{and} \quad f(x,t) - f(x-h,t) = h \bigl[ \psi_{1,-h}(x,t) \bigr]
\end{equation}
and
\begin{equation}\label{Car-1a}
\psi_{1,h}(x,t)-\psi_1(x,t) = h \bigl[ \psi_{2,h}(x,t) \bigr] \quad \text{and} \quad \psi_1(x,t)-\psi_{1,-h}(x,t) = h \bigl[\psi_{2,-h}(x,t) \bigr]. 
\end{equation}
This and the Carath{\'e}odory theorem further ensure that there exists $\psi_2 \colon \RR \times [0,\infty) \to \RR$ such that it holds for all $x \in \RR$, $t\in [0,\infty)$ that
\begin{equation}\label{Car-2}
\frac{\partial f}{\partial x}(x,t) = \psi_1(x,t) = \lim_{h\to 0} \psi_{1,h}(x,t) = \lim_{h\to 0} \psi_{1,-h}(x,t)
\end{equation}
and 
\begin{equation}\label{Car-3}
\frac{\partial^2 f}{\partial x^2}(x,t) = \frac{\partial \psi_1}{\partial x}(x,t) = \psi_2(x,t) = \lim_{h\to 0} \bigl( \psi_{2,h}(x,t) + \psi_{2,-h}(x,t) \bigr).
\end{equation}
This proves that for all $h,x\in \RR$, $t\in[0,\infty)$ such that $\lvert h \rvert \ll 1$ it holds that
\begin{equation}\label{new_gather1}
\psi_{2,h}(x,t) \approx \frac{1}{2} \frac{\partial^{2} f}{\partial x^{2}}(x,t)
\qquad \text{and} \qquad
\psi_{2,-h}(x,t) \approx \frac{1}{2} \frac{\partial^{2} f}{\partial x^{2}}(x,t)
\end{equation}
We now recapitulate the above arguments as an extension of the classical Carath{\'e}odory theorem (compare with, e.g., \cite[Theorem 6.1.5]{Bartle}).
\begin{theorem}\label{Carath-Gen}
Let $c \in \RR$, $t \in [0,\infty)$ and let $f \colon \RR \times [0,\infty) \to \RR$ be a function. Then $f$ is twice differentiable with respect to the variable $x$ at the point $(c,t)$ if and only if there exist $\psi_i \colon \RR \times [0,\infty) \to \RR$, $i\in\{1,2\}$, such that it holds that
\begin{equation}
f(x,t)-f(c,t)=\psi_1(x,t)(x-c)+\psi_2(x,t)(x-c)^2.
\end{equation} 
Moreover, in this case it holds that
\begin{equation}
\lim_{x\to c} \psi_1(x,t)= \frac{\partial f}{\partial x}(c,t)
\qquad \text{and} \qquad 
\lim_{x\to c} \psi_2(x,t) = \frac{1}{2} \frac{\partial^{2} f}{\partial x^{2}}(c,t).
\end{equation}
\end{theorem}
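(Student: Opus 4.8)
The plan is to prove \cref{Carath-Gen} in direct analogy with the classical first-order Carathéodory theorem \cite[Theorem 6.1.5]{Bartle}, essentially by repackaging the two-fold application of that theorem already carried out in \eqref{Car-1}--\eqref{new_gather1}, with the base point moved from $x$ to $c$ (so that the increment $h$ there becomes $x-c$ here). I would fix $t\in[0,\infty)$ once and for all; only the map $x\mapsto f(x,t)$ matters, the variable $t$ being carried along inertly, and the standing hypothesis recorded just before the theorem --- that $\frac{\partial^2 f}{\partial x^2}$ is continuous and bounded --- is what would make every estimate below uniform in $t$.

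For the forward direction I would assume $f$ is twice differentiable in $x$ at $(c,t)$, so that $\frac{\partial f}{\partial x}(\cdot,t)$ exists on a neighbourhood of $c$ and is differentiable at $c$. Applying the classical Carathéodory theorem to $x\mapsto\frac{\partial f}{\partial x}(x,t)$ at $c$ produces $\beta$, continuous at $c$ with $\beta(c,t)=\frac{\partial^2 f}{\partial x^2}(c,t)$ and $\frac{\partial f}{\partial x}(x,t)-\frac{\partial f}{\partial x}(c,t)=\beta(x,t)(x-c)$. I then set $\psi_1(x,t):=\frac{\partial f}{\partial x}(c,t)$ (constant in $x$) and $\psi_2(x,t):=\bigl(f(x,t)-f(c,t)-\psi_1(x,t)(x-c)\bigr)(x-c)^{-2}$ for $x\neq c$, with $\psi_2(c,t):=\tfrac12\frac{\partial^2 f}{\partial x^2}(c,t)$; the decomposition and $\lim_{x\to c}\psi_1(x,t)=\frac{\partial f}{\partial x}(c,t)$ are then immediate. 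The only real work is $\lim_{x\to c}\psi_2(x,t)=\tfrac12\frac{\partial^2 f}{\partial x^2}(c,t)$: writing $R(x):=f(x,t)-f(c,t)-\frac{\partial f}{\partial x}(c,t)(x-c)-\tfrac12\frac{\partial^2 f}{\partial x^2}(c,t)(x-c)^2$ one has $R(c)=0$ and $R'(x)=\bigl[\beta(x,t)-\frac{\partial^2 f}{\partial x^2}(c,t)\bigr](x-c)$, so the mean value inequality gives $|R(x)|\le(x-c)^2\sup_{s}\bigl|\beta(s,t)-\frac{\partial^2 f}{\partial x^2}(c,t)\bigr|$ (supremum over $s$ between $c$ and $x$), which is $o((x-c)^2)$ by continuity of $\beta$ at $c$; hence $\psi_2(x,t)=\tfrac12\frac{\partial^2 f}{\partial x^2}(c,t)+R(x)(x-c)^{-2}\to\tfrac12\frac{\partial^2 f}{\partial x^2}(c,t)$. (A single use of L'Hôpital's rule would do the same job, but the mean value inequality stays in the spirit of \eqref{Car-1a}.)

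For the converse I would assume the decomposition holds with $\psi_1,\psi_2$ continuous at $c$ and --- here is the point demanding care --- with $\psi_1$ normalised to carry no first-order increment at $c$ (concretely, taken constant in $x$, mirroring the fact that the linear coefficient in classical Carathéodory is forced to equal the derivative at the base point). Dividing by $(x-c)$ and letting $x\to c$ shows $\frac{\partial f}{\partial x}(c,t)$ exists and equals $\psi_1(c,t)$; then $\psi_2(x,t)=\bigl(f(x,t)-f(c,t)-\frac{\partial f}{\partial x}(c,t)(x-c)\bigr)(x-c)^{-2}$ has the limit $\psi_2(c,t)$, which is exactly the statement that $f(\cdot,t)$ admits a second-order Taylor expansion at $c$ with quadratic coefficient $\psi_2(c,t)$, so $\frac{\partial^2 f}{\partial x^2}(c,t)$ exists and equals $2\psi_2(c,t)$. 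I expect the main obstacle to be precisely this normalisation: the bare decomposition is vacuous (given any $\psi_1$ one can solve for $\psi_2$), and without forcing $\psi_1$ to record only the first-order behaviour the claimed limit for $\psi_2$ genuinely fails --- for instance with $f(x,t)-f(c,t)=\sqrt{|x-c|}\,\sin\!\bigl(\tfrac{1}{x-c}\bigr)(x-c)$, i.e.\ $\psi_1(x,t)=\sqrt{|x-c|}\,\sin\!\bigl(\tfrac{1}{x-c}\bigr)$ and $\psi_2\equiv0$, both $\psi_i$ are continuous at $c$ yet $f$ has no second derivative there. The only other delicate (but routine) point is ensuring $\frac{\partial f}{\partial x}(\cdot,t)$ is defined on a full neighbourhood of $c$ so that $\beta$ and the mean value inequality are legitimate --- which is exactly what the continuity-and-boundedness hypothesis on $\frac{\partial^2 f}{\partial x^2}$ secures.
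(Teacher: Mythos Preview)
Your approach is essentially the same as the paper's: the paper's own proof is omitted and simply points back to the preceding Carath\'eodory-type arguments \eqref{Car-1}--\eqref{new_gather1}, which is exactly what you propose to repackage with the base point shifted from $x$ to $c$. Your forward direction is a clean execution of that plan.

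Where you go beyond the paper is in the converse direction, and your concern there is legitimate. You correctly observe that the bare decomposition $f(x,t)-f(c,t)=\psi_1(x,t)(x-c)+\psi_2(x,t)(x-c)^2$ is vacuous without some normalisation on $\psi_1$ (or at least continuity of both $\psi_i$ at $c$ together with a convention pinning down how the first-order part is split off), and your counterexample makes the point. The paper's omitted proof does not confront this; it tacitly relies on the specific construction in \eqref{Car-1}--\eqref{Car-1a}, where $\psi_1$ is built as the Carath\'eodory increment function for $f$ itself and hence automatically carries the correct first-order behaviour. So your proposal is not only faithful to the paper's intended argument but also more careful about the hypotheses the ``only if'' direction actually requires.
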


\begin{proof}[Proof of \cref{Carath-Gen}]
We omit the details of this proof for brevity. However, the result follows from arguments analogous to those used to generate \cref{Car-1}, \cref{Car-1a}, \cref{Car-2}, \cref{Car-3}, and \cref{new_gather1}.
The proof of \cref{Carath-Gen} is thus completed.
\end{proof}

The alternative derivation of \cref{eq:1} follows directly from \cref{Carath-Gen} (combined with the previous results outlined in \cref{derivation1}).
It is worth noting that the alternative derivation presented above only requires the function $f$ to be twice differentiable with respect to $x$ and once differentiable with respect to $t$. In fact, \cref{Carath-Gen} can be generalized to situations where $f$ possesses even less regularity. However, we leave these considerations for future endeavors.

\section{Closed-form solutions to the Einstein equation in application to GUI classification via retrieval time} \label{closed_form}

As should be clear from \cref{einstein-paradigm}, we intend to use the Einstein model of random jumps to interpret the results of experimental observations.
This approach should present a stark contrast to the traditional approach of employing Fick's law, flux conservation laws, and the thermodynamical law that density is proportional to the mass concentration function (see, for example, the classical work by \cite{Landau-1}). The arguments in this section motivate the novel approach with the particular case of GUI classification via retrieval time in mind.

\subsection{Development of closed-form solutions to the Einstein equation}

To accomplish our task, we assume that the number of molecules which form a compound of interest is proportional to the molecule's mass and that this molecular mass can be adequately represented by a scalar-valued function, say $C \colon \RR \times [0,\infty) \to \RR$.
In our particular case, we are interested in the setting where the domain of the process can be modeled via a one-dimensional tube. 
Moreover, we assume that there are $n\in\NN$ molecules of interest. For each $i\in\{1,2,\dots,n\}$ we assume that the associated molecular weight functions, $C_i \colon \RR \times [0,\infty) \to \RR$, satisfy \cref{eq:1} (with $f(x,t) \leftarrow C_i(x,t)$ for each $i\in\{1,2,\dots,n\}$ in the notation of \cref{eq:1}).
Next, we assume that for all $i\in\{1,2,\dots,n\}$ it holds that $D_i$, $\omega_i$, and $\gamma_i$ are constant. Furthermore, for each $i\in\{1,2,\dots,n\}$ we have the associated Green's function which satisfies for all $x \in \RR$, $t\in(0,\infty)$ that
\begin{equation}\label{alfa}
C_i(x,t) = \bigl(4\pi D_i\bigr)^{-\nicefrac{1}{2}} \exp\Bigl( -\tfrac{\gamma_i}{2 D_i} x + \bigl(\omega_i - \tfrac{\gamma_i^2}{4 D_i}\bigr) t - \tfrac{1}{2}\ln(t) - \tfrac{\lvert x\rvert^2}{4D_i t} \Bigr).
\end{equation}
This and straightforward calculus ensure that for all $i\in\{1,2,\dots,n\}$, $x \in \RR$, $t\in(0,\infty)$ it holds that
\begin{equation}\label{dgammaomega}
\left(\frac{\partial }{\partial t}-D_i\frac{\partial^2}{\partial x^2}+\gamma_i \frac{\partial }{\partial x}-\omega_i\right) C_i(x,t) = 0.
\end{equation}
Note that we can interpret the $C_i(x,t),$ $i\in\{1,2,\dots,n\}$, as analytic representations of the concentration function of the molecules which are injected into a sample of N-glycans (which are subsequently transported through the tube for classification via a mass spectrometer). 

Next assume that $L \in (0,\infty)$ is the point of observation (signal recording), the tube has infinite length (that is, we assume the domain of the problem to be $\RR$), and that for all $i\in\{1,2,\dots,n\}$ the initial concentration is modeled by $\delta_0(x)$ (the standard delta function).
This and \cref{alfa} demonstrate that for all $i\in\{1,2,\dots,n\}$, $t\in[0,\infty)$ it holds that
\begin{equation}\label{CL}
C_i(L,t) =
\begin{cases}
\bigl(4\pi D_i\bigr)^{-\nicefrac{1}{2}} \exp\Bigl( -\tfrac{\gamma_i}{2 D_i} L + \bigl(\omega_i - \tfrac{\gamma_i^2}{4 D_i}\bigr) t - \tfrac{1}{2}\ln(t) - \tfrac{L^2}{4D_i t} \Bigr) & \colon t > 0 \\
0 & \colon t = 0
\end{cases}.
\end{equation}

\subsection{Employing closed-form solutions to the Einstein equation to classify GUI via retrieval time}

Recall that we have experimental data from trials employing pure samples of each GUI. Thus, we will use this data to determine explicitly the $D_i$, $i\in\{1,2,\dots,n\}$, coefficients for all components of the N-glycans and the GUIs. 
In order to find the extreme values at the point of observation, we differentiate \cref{CL} to obtain for each $i\in\{1,2,\dots,n\}$, $t\in(0,\infty)$ that
\begin{equation}\label{extrim_crit}
\frac{\frac{\partial C_i}{\partial t}(L,t)}{C_i(L,t)} = \omega_i -\frac{\gamma_i^2}{4D_i}+\frac{L ^2}{4D_i t^2}-\frac{1}{2 t}
\qquad \text{and} \qquad
\frac{\frac{\partial^2 C_i}{\partial t^2}(L,t)}{C_i(L,t)} = -\frac{L^2}{2D_i t^3} + \frac{1}{2t^2}.
\end{equation}
Combining this with the assumption that $\Delta_e \approx 0$ (no drift --- cf.~\cref{remark1}) and the assumption that for all $i\in\{1,2,\dots,n\}$ it holds that $\omega_i=0$ (no absorption) yields that for all $i\in\{1,2,\dots,n\}$ it holds that
\begin{equation}\label{Dvst_max} 
D_i=\frac{1}{2}\left(\frac{L^2}{t_{\max}^i}\right),
\end{equation}
where for each $i\in\{1,2,\dots,n\}$ it holds that $t_{\max}^i \in (0,\infty)$ is the maximal critical point from \cref{extrim_crit}.
Combining \cref{Dvst_max} with experimental data will allow for the determination of the $D_i$, $i\in\{1,2,\dots,n\}$, by letting the retrieval time for each pure sample correspond to the $t_{\max}^i$, $i \in \{1,2,\dots,n\}$.

Note that \cref{fig: Retriv_Time_Pure} presents a graph of the actual spectrometer signals, with associated peaks, which correspond to the molecules of GUI passing through the receiver of the mass spectrometer (obtained via the methods outlined in \cref{GUI-structure}). As we can see, there are eight distinct GUIs and they each have different retrieval times. \cref{tab: T_i} presents the calculated diffusion coefficients $D_i$, $i\in\{1,2,\dots,n\}$, for each retention time. \cref{tab: T_i} and \cref{fig:D-vs-Tret_Claen} together demonstrate the expected inverse proportional relationship between retention time and diffusivity.

\begin{figure}[!htb]
\centering
\includegraphics[scale=0.8]{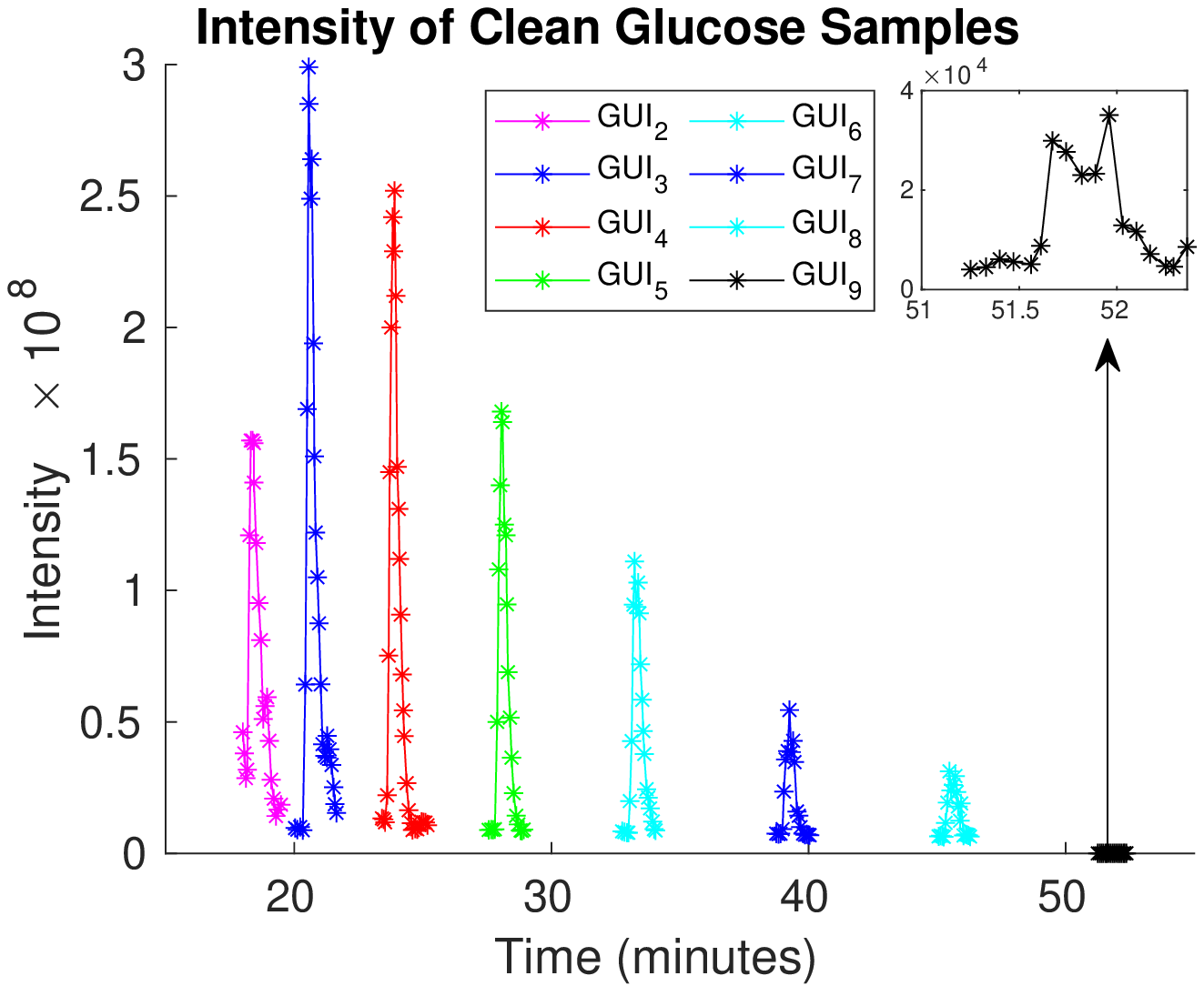}
\caption{Raw data from the mass spectrometer readings for pure glucose samples at the observation point $x=L$. Note how the peaks do not overlap. \label{fig: Retriv_Time_Pure}}
\end{figure}

The intuition provided by the Einstein paradigm tells us that the molecule's retrieval time should depend on the molecule's time of free travel and that the value should increase proportionally with the molecules mass. This observation is clearly supported by our data (see \cref{fig_4}), as we have the mass of the GUI molecules increases with their index value.
Furthermore, our post-processing --- which can be performed using our analytical formula for pure samples --- confirms this intuitive observation and Einstein's theory that the diffusion coefficients are inversely proportional to $\tau$ (cf.~\cref{assump2}).

\begin{figure}[!htb]
\centering
\subcaptionbox{Computed diffusivity for each retention time.\label{tab: T_i}}[0.48\textwidth]{
\centering
\begin{tabular}{| c| c| c|}
\hline
Sample & Retrival Time & Diffusivity\\
\hline
$\gui_2$ & $T^2_{\text{ret}}=19.48$ & $
0.02469$\\
\hline
$\gui_3$ & $T^3_{\text{ret}}=21.64$ & $0.02202$\\
\hline
$\gui_4$ & $T^4_{\text{ret}}=25.53$ & $0.01830$ \\
\hline
$\gui_5$ & $T^5_{\text{ret}}=28.92$ &$0.01584$\\
\hline
$\gui_6$ & $T^6_{\text{ret}}=34.05$ & $0.01298$ \\
\hline
$\gui_7$ & $T^7_{\text{ret}}=40.04$ &$
0.01048$\\
\hline
$\gui_8$ & $T^8_{\text{ret}}=46.28$ & $
0.00849$\\
\hline
$\gui_9$ & $T^9_{\text{ret}}=52.36$ &$0.00693$\\
\hline
\end{tabular}
}~~~~~
\subcaptionbox{Plot of relationship between calculated diffusivity and retention time. \label{fig:D-vs-Tret_Claen}}[0.48\textwidth]{
\centering
\begin{tikzpicture}
\begin{axis}[
    height = 5cm,
    width = 4cm,
    clip = false,
    tick scale binop=\times,
    domain=15:55,
    xmin=15, xmax=55,
    ymin=0.005, ymax=0.0275,
    x=1.5mm,
    axis lines=left,
    xlabel near ticks,
    xlabel={Retention time},
    ylabel={Computed diffusivity}
    ]
\node[circle,fill,red,inner sep=2pt] at (axis cs:19.48,0.02469) {} ;
\node[circle,fill,red,inner sep=2pt] at (axis cs:21.64,0.02202) {} ;
\node[circle,fill,red,inner sep=2pt] at (axis cs:25.53,0.01830) {} ;
\node[circle,fill,red,inner sep=2pt] at (axis cs:28.92,0.01584) {} ;
\node[circle,fill,red,inner sep=2pt] at (axis cs:34.05,0.01298) {} ;
\node[circle,fill,red,inner sep=2pt] at (axis cs:40.04,0.01048) {} ;
\node[circle,fill,red,inner sep=2pt] at (axis cs:46.28,0.00849) {} ;
\node[circle,fill,red,inner sep=2pt] at (axis cs:52.36,0.00693) {} ;
\end{axis}
\end{tikzpicture}
}
\caption{A depiction of the relationship between retention time and calculated diffusivity. \cref{tab: T_i} and \cref{fig:D-vs-Tret_Claen} make it clear that there is a nonlinear inverse proportional relationship between diffusivity and retention time.\label{fig_4}}
\end{figure}

Indeed, consider two GUIs, say, $\gui_1$ and $\gui_2$. These two GUIs correspond to the molecules $M_1$ and $M_2$, respectively. 
Since these are distinct types of glucose molecules, it is well-known that the molecules will not develop novel chemical bonds throughout the experiment.
Therefore, their ``free jump'' lengths are mutually independent. These ``free jumps'' of molecule $M_1$ corresponds to $\tau_1$ and the ``free jumps'' of molecule $M_2$ corresponds to $\tau_2$ (cf.~\cref{assump2}).
If we let $C_1(x,t)$ and $C_2(x,t)$ denote the concentrations at position $x$ and time $t$ of the molecules $M_1$ and $M_2$, respectively, then both of these functions will satisfy \cref{eq:1} (each with their appropriate associated parameters).
This intuition combined with Einstein's paradigm and the assumptions outlined above then implies for all $i\in\{1,2,\dots,n\}$ it holds that
\begin{equation}\label{D_i}
D_i=\frac{1}{2\tau_i}\int_{-\infty}^\infty (\Delta-\Delta_e)^2\varphi(\Delta)\,d\Delta
\qquad\text{and}\qquad
\gamma_i=\frac{1}{\tau_i}\int_{-\infty}^\infty \Delta\varphi(\Delta)\,d\Delta,
\end{equation} 
where $\varphi(\Delta)$ is frequency at which free jumps of the length $\Delta$ occur.
Note that in this formulation we have assumed that the $C_i$, $i\in\{1,2,\dots,n\}$, are associated to each molecule $M_i$, $i\in\{1,2,\dots,n\}$, and no novel molecular bonds are formed.
Therefore, the $\Delta_i$, $i\in\{1,2,\dots,n\}$, are the lengths of the associated ``free jumps'' of the entire compound of molecules of type $M_i$, $i\in\{1,2,\dots,n\}$. We also assumed that the expected length of free jumps were the same for all $i\in\{1,2,\dots,n\}$; that is, we assumed that $\Delta_{i,e} = \Delta_e$, $i\in\{1,2,\dots,n\}$ (cf.~\cref{remark1}).
Moreover, the molecules are arranged by their molecular weight as their lengths are proportional to $i \in \{1,2,\dots,n\}$. Clearly, a higher molecular mass is associated to smaller ``free jump'' lengths. Mathematically, this means for all $i,j\in\{1,2,\dots,n\}$ with $i < j$ it holds that $\tau_i < \tau_j$.

Note that the experimental observations presented in \cref{fig_4} support precisely this claim. Further observe that the velocity of the filtration (drift) is so small that diffusion is the dominant transport property.
Indeed, if drift had a larger influence than diffusion, then the decrease of the retrial time with respect to the GUI mass would be linear. Since \cref{fig:D-vs-Tret_Claen} clearly indicates an inversely propositional relationship, we conclude that diffusion is dominant. This is an important observation as drift mainly depends upon the boundary conditions of a problem whereas diffusion depends only on object versus tube structure properties. This crucial observation is what allows for the proposed classification method to work so well.
Finally, we mention that in this preliminary result we have also ignored the associated absorption rates. However, classification methods for complex serum samples will consider these effects.


\begin{remark}\label{remark_5}
From the Einstein paradigm it follows that the calculated diffusion coefficients are inversely proportional to $\tau$ (the ``free jump times'' --- cf.~\cref{assump2}) of each molecule. \cref{Dvst_max} provides an explicit relationship between retrieval time and the ``free jumps'' for molecule $M_i$, $i\in\{1,2,\dots,n\}$: smaller retrieval times correspond to larger $\tau$. 
\end{remark}

Note that further refinement of the classification criteria may come from information which is hidden in the dynamics of the intensity of the signal prior to the retrieval time. In our future research, we intend to generalize our procedure by further incorporating the area under the graph of the signal, prior to retrieval time, for each molecule. Such considerations result in a need for better understanding the following functional   
\begin{equation}\label{Class-Index}
I(t) = \frac{d}{dt} \ln \left(\int_{0}^{t} C(L,\tau) \, d\tau\right) = \frac{C(L,t)}{\int_{0}^{t} C(L,\tau) \, d\tau}.
\end{equation}
From this it is clear that if $T_{\text{ret}}$ is the retrieval time then \cref{Class-Index} evaluates to $(T_{\text{ret}})^{-1}$, {\em if} we employ a (very) rough numerical approximation of the integral which employs one rectangle of height $H = T_{\text{ret}}$. This of course agrees with \cref{Dvst_max}.  
In this regard, one can see that \cref{Class-Index} is a true generalization of \cref{Dvst_max}.


\section{Serum classification using GUIs as markers}\label{classification_sec}

We will use the general ideas of \cref{remark_5} as a basis for the development of our classification algorithm using GUIs as markers.
The basic idea of the algorithm (which was outlined in the \bref{proposed_alg}{Proposed Algorithm}) consists of determining the 
diffusivity coefficients of ``marker'' molecules in a base sample of interest (cf.\ \cref{eq:1}). The remaining molecules are classified by grouping them based on diffusivity coefficient ranges and computing the associated absorption coefficients, which serve as a correction term of sorts (cf.\ \cref{eq:1}).
We assume throughout that drift coefficients are negligible (cf.\ \cref{eq:1}). This assumption is justified due to the intended use of post-processing in the actual physical experiment.

\subsection{Serum classification algorithm}

The proposed algorithm for the classification of a given unknown sample using two parameters, diffusivity and absorption, consists of five main steps. Note that throughout this section the primary focus in the classification of N-glycans.

\medskip

\begin{jlp_alg*}\hypertarget{alg1}{}
Let $N \in \NN$ and consider the set of samples $\mathcal{A} = \{A_0,A_1,\dots,A_N\}$.
\begin{enumerate}[label={\bf Step \arabic*},labelindent=10pt,leftmargin=!,
labelwidth=\widthof{\ref{last_item}},
ref={Step \arabic*}]
\item
\label{step1}
Select a base sample, which without loss of generality we assume to be $A_0$. Let $n_0 \in \NN$ represent the number of molecules in $A_0$. Inject the ``marker molecules'', or GUIs --- which we designate as $\gui_1, \gui_2, \dots ,\gui_8$, into the sample $A_0$. 
Without loss of generality, assume that the GUIs are indexed with respect to increasing mass.
Collect all retention times $T_i^0$, $i\in\{1,2,\dots,n_0\}$, from the mass spectrometer. Identify (manually) which retention times are associated to the GUIs.
\item 
\label{step2}
Using the retention times from \ref{step1}, calculate the diffusion coefficients for each GUI, which we designate as $D_i^0$, $i\in\{1,2,\dots,8\}$, via \cref{Dvst_max}.
Set the associated absorption coefficients, which we designate as $\omega_i^0$, $i\in\{1,2,\dots,8\}$, to zero.
This completes the baseline classification procedure (if desired, one can classify the remaining objects in $A_0$).
\item
\label{step3}
Take a new (unknown) sample, which without loss of generality we assume to be $A_1$, and again inject the GUI molecules from \ref{step1}. 
Let $n_1 \in \NN$ represent the number of molecules in $A_1$ (after injection with GUI molecules).
Pass $A_1$ through the mass spectrometer and again collect all retention times, $T^1_i$, $i\in\{1,2,\dots,n_1\}$.
Using \cref{Dvst_max} compute all associated diffusivity coefficients, $D_i^1$, $i\in\{1,2,\dots,n_1\}$.
\item
\label{step4}
Using the results from \ref{step1} find the coefficients $i_1,i_2,\dots,i_n \in \{1,2,\dots,n_1\}$ which for each $j\in\{1,2,\dots,8\}$ satisfy that
\begin{equation}\label{eq:30}
\{1,2,\dots,n_1\} \ni i_j = \min\bigl\{k \in \{1,2,\dots,n_1\} \backslash \{i_1,i_2,\dots,i_{j-1}\} \colon \lvert D_{i_j}^1 - D_j^0 \rvert \bigr\}.
\end{equation}
Note that \cref{eq:30} is well-defined for all $j\in\{1,2,\dots,8\}$ by construction.
The molecules associated with the indices $i_1,i_2,\dots,i_8$ are the GUI markers $\gui_1, \allowbreak \gui_2,\allowbreak \dots, \allowbreak \gui_8$.
Let $\mathbb{D}^1, \mathbb{T}^1 \in \RR^8$ satisfy that
\begin{equation}\label{eq:31}
\mathbb{D}^1 = \{D_{i_1}^1, D_{i_2}^1, \dots, D_{i_8}^1\} \qquad \text{and} \qquad \mathbb{T}^1 = \{T_{i_1}^1, T_{i_2}^1, \dots, T_{i_8}^1\}.
\end{equation}
Finally, let $\omega_{i_j}^1$, $j\in\{1,2,\dots,8\}$, satisfy for all $j \in \{1,2,\dots,8\}$ that $\omega_{i_j}^1 = 0$.
\item 
\label{step5}
We now classify the remaining molecules from $A_1$ using \cref{eq:31} to ``classify'' all remaining objects.
Note that it is the case that 
$\max_{i \in \{1,2,\dots,n_1\}} T_i^1 \in \mathbb{T}^1$.
For all $k\in\{1,2,\dots,n_1\} \backslash \{i_1,i_2,\dots,i_8\}$ we compute the associated $\omega_k^1$ as
\begin{equation}
\omega_i^1 = 
\begin{cases}
\bigl(T_k^1\bigr)^{-1}\left[1 - \bigl(2 T_k^1 D_{i_1}^1\bigr)^{-1}\right] & \colon 0 < T_k^1 < T_{i_1}^1 \\
\bigl(T_k^1\bigr)^{-1}\left[1 - \bigl(2 T_k^1 D_{i_j + 1}^1\bigr)^{-1}\right] & \colon T_{i_j}^1 < T_k^1 < T_{i_{j+1}}^1,\ j\in\{1,2,\dots,7\}
\end{cases}
\end{equation}
(cf.~\cref{extrim_crit}).
\item
\label{last_item}
Repeat \ref{step3}, \ref{step4}, and \ref{step5} for the remaining $A_i \in \mathcal{A}$, $i\in\{2,3,\dots,N\}$.
\end{enumerate}
\end{jlp_alg*}

\medskip

It is worth noting that only the diffusivity coefficients calculated from \ref{step1} and \ref{step2} need to be stored for future use. This data set serves as the baseline learning procedure for the data-driven classification algorithm. Thus, while the initial manual classification can be tedious, it results in an algorithm which can be used to classify a large number of other unknown samples (of appropriate type).

\begin{remark}
\bref{alg1}{Main Algorithm} can be significantly improved if we consider the Classification Index to be a {\em time series} instead. 
In this case, we may consider \cref{Class-Index} as the basis for classification. When considering a serum containing only the ``marker'' molecules, it follows that $t$ is the retrieval time in \cref{Class-Index}. An analogous (but improved) algorithm can then be obtained by approximating the integral implicitly (for numerical stability) and employing a Newton-Raphson-type method to solve for the desired parameters. As noted earlier, this will be a focus of forthcoming work.
\end{remark}  

\subsection{Description of the experimental set up and materials used}\label{GUI-structure}

In this section we briefly outline the materials and experimental protocols followed in order to obtain our experimental data, which is used for comparison. 

\subsubsection{Material}

Standard glycoproteins, fetuin and ribonuclease B (RNase B) and pooled human blood serum (HBS) were purchased from Sigma Aldrich (St. Louis, MO). Formic acid (FA), borane-ammonia, dimethyl sulfoxide (DMSO), iodomethane and, sodium hydroxide beads were also obtained from the same vendor. HPLC grade water was obtained from Avantor Performance Materials (Center Valley, PA). HPLC grade acetonitrile (ACN), methanol, and ethyl alcohol were supplied by Fisher Scientific (Fair Lawn, NJ). PNGase F enzyme and 10XG7 buffer (0.5M phosphate buffer saline) were purchased from New England Biolabs.  

\subsubsection{Sample preparation}

Model Glycoproteins and Dextrin: 20 $\mu$g each of fetuin and RNase B were mixed with G7 buffer to get a final concentration of 20 mM for the buffer. The samples were then denatured at 90$^\circ$C for 30 minutes. Samples were then cooled at room temperature and treated with 1.0 $\mu$l of PNGase F, followed by incubation at 37$^\circ$C for 18 hours. PNGase F digestion was followed by precipitation of de-N-glycosylated proteins with 90\% ethanol at $-20^\circ$C. Reduction of reducing ends of the purified glycans was done by addition of 10 $\mu$ of borane-ammonia complex (10 $\mu g$/$\mu$L) and incubating it at 60$^\circ$C water bath for one hour. Methanol was later used to remove borane in the form of borate from reduced glycan samples. The methanol washing step was repeated three times to ensure the complete removal of borate from the samples. Reduction was then followed by permethylation of the samples, using a previously reported method \cite{15}. For this purpose, reduced and dried glycan samples were resuspended in 1.2 $\mu$L and 30 $\mu$L of DMSO. Later, 20 $\mu$L of iodomethane was added to the samples and they were loaded on DMSO soaked sodium hydroxide beads packed in spin columns. The spin columns were washed with 200 $\mu$L of DMSO, using a centrifuge at 1800 rpm for two minutes, prior to the loading of samples. 

Once loaded, the samples were incubated at room temperature for two minutes. After 25 minutes, an additional 20 $\mu$L of iodomethane was added and the samples were again incubated at the room temperature for 15 minutes. Permethylated glycans were then collected by centrifugation at 1800 rpm for two minutes. For complete elution of permethylated glycans, 30 $\mu$L of ACN was added to the spin columns and again the eluants were collected by centrifugation. Permethylated glycans were further dried and resuspended in 20\% ACN and 0.1\% FA. Each of the samples were run in triplicates and 1 $\mu$g of the samples were injected for each run. 
Dextrin standard was mixed with the samples prior to reduction and, therefore, was reduced and permethylated with each sample. 1 $\mu g$ of sample was spiked with 100 ng of dextrin.

\subsubsection{Human blood serum} 

10 $\mu$L of human blood serum was mixed with 90 $\mu$L of G7 buffer to get a final concentration of 20mM for the buffer. Proteins from the samples were denatured in 90$^\circ$C water bath for 30 minutes. After cooling at room temperature, 1.2 $\mu$L of PNGase F was added to the samples. They were then incubated at 37$^\circ$C for 18 hours. After the completion of the incubation, proteins were precipitated at $-20^\circ$C for one hour. Reduction and permethylation were then performed as described previously for model glycoproteins. Resuspension was again done in 20\% ACN and 0.1\% FA. 1 $\mu$L of the serum samples were then injected for each of the triplicate runs.    

\subsubsection{Liquid chromatography (LC) conditions}   

Chromatography was performed on UltiMate 3000 Nano UHPLC system using C18 column. Optimum temperature for the oven was kept at 55$^\circ$C. A solution of 98\% water, 0.2\% ACN, and 0.1\% FA was utilized as mobile phase A while, 100\% ACN and 0.1\% FA was mobile phase B. Initially, the gradient was set at 20\% mobile phase B. It was then increased to 42\% in 11 minutes. After 48 minutes, it was increased to 55\% and then changed to 90\% at 49 minutes. It remained at 90\% for 54 minutes of  total sample run and plummeted to 20\% again for equilibration of the column for the final six minutes.

\subsubsection{Mass spectrometry (MS) conditions}  

LTQ Orbitrap Velos (Thermo Scientific) was used to analyze the samples. The mass spectrometer was set to the positive ion mode with an ESI voltage of 1.6 kV. Full MS was performed at 100,000 resolution with 200-2000 m/z scan range.  MS2 was acquired with collision induced dissociation (CID) and higher energy collision dissociation (HCD) with normalized dissociation energies of 30\% and 45\%, respectively. Activation Q (one of the parameters used in Mass spectrometry methods --- the value controls the radio frequency applied to control fragmentation of ions during analysis) was 0.25. Injection time was 10 ms. Repeat count of dynamic exclusion and repeat duration were 2 s and 30 s, respectively. The exclusion duration was 60 s. The four most intense ions were selected from the full MS for further CID and HCD based dissociation by applying data-dependent acquisition mode. The precursor ion selection window was 1.50. The MS2 intensity threshold was 5000 counts. Singly charged ions were excluded for MS2.

\subsubsection{Data analysis} 

The extracted ion chromatograms (EIC) of full MS data were used to determine the glycan composition as well as retention times of reduced and permethylated glycans derived from model glycoproteins, and human blood serum, with a mass tolerance of 10 ppm. Retention times of reduced and permethylated glucose units were also determined using the EIC. 

\subsection{Data classification of an actual experiment using the proposed algorithm}\label{experiment-and classification}

We now demonstrate \bref{alg1}{Main Algorithm} and its efficacy through an experimental example. We will use data obtained via the methods and procedures outlined in \cref{GUI-structure}.
To obtain the data in \cref{table2}, \cref{table3}, and \cref{table4} below we implemented our algorithm for a particular set of experiments (obtained via the methods outlined in \cref{GUI-structure}). In the first table (left table) in \cref{table2}, the GUIs were known (green rows). Then this data is then used according to \bref{alg1}{Main Algorithm} to identify the eight GUI markers in five other (unknown) experiments via the same instrumentation.
We then later used precise verification methods to determine that \bref{alg1}{Main Algorithm} can distinguish GUIs with an error of no more than two percent.  

\begin{figure}[!htb]
\centering
\scalebox{0.8}{
\begin{tabular}{| c| c| c|}
\hline
$T_{\text{ret}}$ & Diffusivity & Absorption \\
\hline
\rowcolor{green}$54.11$ & $0.009240$ & $0$\\
\hline
\rowcolor{green}$50.47$ & $0.009907$ & $0$\\
\hline
$47.19$ & $0.010595$ & $0.00088892$ \\
\hline
$43.77$ & $0.011423$ &$0.00013409$\\
\hline
$43.58$ & $0.011473$ & $0.00849000$ \\
\hline
\rowcolor{green}$43.26$ & $0.0115580$ &$0$\\
\hline
$41.87$ & $0.011942$ & $0.00186390$\\
\hline
$41.16$ & $0.012148$ &$0.00167795$\\
\hline
$40.82$ & $0.012249$ &$0.00158391$\\
\hline
$38.47$ & $0.012997$ &$0.00083303$\\ 
\hline
$37.49$ & $0.013337$ &$0.00045986$\\
\hline
$37.15$ & $0.013459$ &$0.00032089$\\
\hline
\rowcolor{green}$36.41$ & $0.013732$ &$0$\\
\hline
$35.66$ & $0.014021$ &$0.00301127$\\
\hline
$34.89$ & $0.014368$ &$0.00265176$\\
\hline
$33.14$ & $0.015088$ &$0.00183817$\\
\hline
\rowcolor{green}$30.33$ & $0.016485$ &$0$\\
\hline
$29.65$ & $0.016863$ &$0.00437691$\\
\hline
\rowcolor{green}$25.20$ & $0.019841$ &$0$\\
\hline
\rowcolor{green}$21.28$ & $0.023496$ &$0$\\
\hline
\rowcolor{green}$18.68$ & $0.026767$ &$0$ \\
\hline
\end{tabular}
\hspace{5mm}
\begin{tabular}{| c| c| c| c|}
\hline 
$T_{\text{ret}}$ & Diffusivity & $D_{\text{base}}$ & Absorption \\
\hline
\rowcolor{green}$54.11$ & $0.009240$ & $0.009240$  & $0$\\
\hline
\rowcolor{green}$50.48$ & $0.0099049$ &  $0.0099069$  &  $0$\\
\hline
$47.23$ & $0.0105865$ &   & $-0.001456959$ \\
\hline
$43.77$ & $0.0114233$ &   &$-0.00350243$\\
\hline
$43.65$ & $0.0114548$ &   & $-0.003584695$ \\
\hline
\rowcolor{green}$43.24$ & $0.0115634$ & $0.0115580$  &$0$\\
\hline
$41.95$ & $0.0119190$  &    & $-0.000966017$\\
\hline
$41.17$ & $0.0121448$  &    &$-0.001463154$\\
\hline
$40.85$ & $0.0122399$   &    &$-0.00167793$\\
\hline
$38.48$ & $0.0129938$  &    &$-0.003491567$\\ 
\hline
$37.51$ & $0.0133298$ &    &$-0.004363895$\\
\hline
$37.12$ & $0.0134698$  &    &$-0.004739115$\\
\hline
\rowcolor{green}$36.40$ & $0.0137363$ & $0.0137325$ &$0$\\
\hline
$35.66$ & $0.0140213$  &    &$-0.001148128$\\
\hline
$34.85$ & $0.0143472$  &    &$-0.001869047$\\
\hline
$33.1$ & $0.0151057$  &    &$-0.003669189$\\
\hline
\rowcolor{green}$30.31$ & $0.0164962$ & $0.016485$  & $0$\\
\hline
$29.64$ & $0.0168691$ &  &$-0.003938399$\\
\hline
\rowcolor{green}$25.21$ & $0.0198334$  & $0.0198413$ &$0$\\
\hline
\rowcolor{green}$21.30$ & $0.0234742$ & $0.023496$  &$0$\\
\hline
\rowcolor{green}$18.70$ & $0.0267380$ & $0.026767$ & $0$\\
\hline
\end{tabular}
}
\caption{[LEFT] Table containing experimental data for base sample, $A_0$, and associated calculated values. [RIGHT] Table containing experimental data for second experiment, $A_1$, with associated calculated values.}\label{table2}
\end{figure}

\begin{figure}[!htb]
\centering
\scalebox{0.8}{
\begin{tabular}{| c| c| c| c|}
\hline 
$T_{\text{ret}}$ & Diffusivity & $D_{\text{base}}$ & Absorption \\
\hline
\rowcolor{green}$54.11$ & $0.0092404$ & $0.009240$  & $0$\\
\hline
\rowcolor{green}$50.49$ & $0.0099030$ &  $0.0099069$  &  $0$\\
\hline
$47.25$ & $0.0105820$ &   & $-0.001451247$ \\
\hline
$43.72$ & $0.0114364$ &   &$-0.003541835$\\
\hline
$43.67$ & $0.0114495$ &   & $-0.003576169$ \\
\hline
\rowcolor{green}$43.32$ & $0.0115420$ & $0.0115580$  &$0$\\
\hline
$41.92$ & $0.0119275$  &    & $-0.000796683$\\
\hline
$41.21$ & $0.0121330$  &    &$-0.001242445$\\
\hline
$40.88$ & $0.0122309$   &   &$-0.001460051$\\
\hline
$38.54$ & $0.0129735$  &   &$-0.003218137$\\ 
\hline
$37.56$ & $0.0133120$ &   &$-0.004082924$\\
\hline
$37.23$ & $0.0134300$  &    &$-0.004393708$\\
\hline
\rowcolor{green}$36.47$ & $0.0137099$ & $0.0137325$ &$0$\\
\hline
$35.72$ & $0.0139978$  &    &$-0.000587812$\\
\hline
$34.84$ & $0.0143513$  &    &$-0.001342862$\\
\hline
$33.12$ & $0.0150966$  &   &$-0.003053966$\\
\hline
\rowcolor{green}$30.33$ & $0.0164853$ & $0.016485$  & $0$\\
\hline
$29.64$ & $0.0168691$ & &$-0.000785403$\\
\hline
\rowcolor{green}$25.21$ & $0.0198334$  & $0.0198413$ &$0$\\
\hline
\rowcolor{green}$21.32$ & $0.0234522$ & $0.023496$  &$0$\\
\hline
\rowcolor{green}$18.68$ & $0.0267666$ & $0.026767$ &$0$\\
\hline
\end{tabular}
\hspace{5mm}
\begin{tabular}{| c| c| c| c|}
\hline 
$T_{\text{ret}}$ & Diffusivity & $D_{\text{base}}$ & Absorption \\
\hline
\rowcolor{green}$54.14$ & $0.0092353$ & $0.009240$  & $0$\\
\hline
\rowcolor{green}$50.67$ & $0.0098678$ &  $0.0099069$  &  $0$\\
\hline
$46.71$ & $0.0107043$ &  & $-0.001814995$ \\
\hline
\rowcolor{green}$43.40$ & $0.0115207$ & $0.0115580$  &$0$\\
\hline
$41.17$ & $0.0121448$ &   & $-0.005604823$ \\
\hline
\rowcolor{green}$36.51$ & $0.0136949$ & $0.013732491$  &$0$\\
\hline
$36.15$ & $0.0138313$  &   & $-0.000275477$\\
\hline
$35.56$ & $0.0140607$  &   &$-0.000751277$\\
\hline
$30.73$ & $0.0162707$   &   &$-0.006120723$\\
\hline
\rowcolor{green}$30.37$ & $0.0164636$  & $0.016485328$   &$0$\\ 
\hline
$25.87$ & $0.0193274$ &   &$-0.006723875$\\
\hline
\rowcolor{green}$25.23$ & $0.0198177$  & $0.01984127$   &$0$\\
\hline
\rowcolor{green}$21.31$ & $0.0234632$ & $0.023496241$ &$0$\\
\hline
\rowcolor{green}$18.69$ & $0.0267523$  & $0.026766595$   &$0$\\
\hline
\end{tabular}
}
\caption{[LEFT] Table containing experimental data for third experiment, $A_2$, with associated calculated values. [RIGHT] Table containing experimental data for fourth experiment, $A_3$, with associated calculated values.}\label{table3}
\end{figure}

As noted above, the left table in \cref{table2} consists of the data obtained from mass spectrometer readings for the sample $A_0$ (cf.~\bref{alg1}{Main Algorithm}) and the calculated absorption coefficients for that sample. The first column of the table contains the retention time data. The second column contains the corresponding calculated diffusion coefficients. The last column contains the calculated absorption coefficients.  The green rows in each table represent the data corresponding to the injected GUI molecules. As mentioned above, for the sample
$A_0$, manual intervention to the mass spectrometer data is mandatory for finding the GUI retention times. See \cref{GUI-structure} for more details.

\begin{figure}[!htb]
\centering
\scalebox{0.8}{
\begin{tabular}{| c| c| c| c|}
\hline 
$T_{\text{ret}}$ & Diffusivity & $D_{\text{base}}$ & Absorption \\
\hline
\rowcolor{green}$54.15$ & $0.0092336$ & $0.009240$  & $0$\\
\hline
\rowcolor{green}$50.79$ & $0.0098445$ &  $0.0099069$  &  $0$\\
\hline
$46.83$ & $0.0106769$ &   & $-0.001805705$ \\
\hline
\rowcolor{green}$43.54$ & $0.0114837$ & $0.011558021$  &$0$\\
\hline
$41.31$ & $0.0121036$ &  & $-0.001306756$ \\
\hline
\rowcolor{green}$36.62$ & $0.0136537$ & $0.013732491$  &$0$\\
\hline
$36.27$ & $0.0137855$  &   & $-0.000266056$\\
\hline
$35.7$ & $0.0140056$  &   &$-0.000721857$\\
\hline
$30.85$ & $0.0162075$   &  &$-0.006062692$\\
\hline
\rowcolor{green}$30.46$ & $0.0164150$  & $0.016485328$   &$0$\\ 
\hline
$25.98$ & $0.0192456$ &    &$-0.006637426$\\
\hline
\rowcolor{green}$25.32$ & $0.0197472$  & $0.01984127$   &$0$\\
\hline
\rowcolor{green}$21.35$ & $0.0234192$ & $0.023496241$ &$0$\\
\hline
\rowcolor{green}$18.71$ & $0.0267237$  & $0.026766595$   &$0$\\
\hline
\end{tabular}
\hspace{5mm}
\begin{tabular}{| c| c| c| c|}
\hline 
$T_{\text{ret}}$ & Diffusivity & $D_{\text{base}}$ & Absorption \\
\hline
\rowcolor{green}$54.12$ & $0.009238729$ & $0.009240$  & $0$\\
\hline
\rowcolor{green}$50.77$ & $0.009848336$ &  $0.0099069$  &  $0$\\
\hline
$46.80$ & $0.010683761$ &  & $-0.001812587$ \\
\hline
\rowcolor{green}$43.46$ & $0.011504832$ & $0.011558021$  &$0$\\
\hline
$41.23$ & $0.012127092$ &  & $-0.001311832$ \\
\hline
\rowcolor{green}$36.56$ & $0.013676149$ & $0.013732491$  &$0$\\
\hline
$36.17$ & $0.013823611$  &   & $-0.000298104$\\
\hline
$35.56$ & $0.014025245$  &   &$-0.000716015$\\
\hline
$30.78$ & $0.016244314$   &   &$-0.006100854$\\
\hline
\rowcolor{green}$30.43$ & $0.016431153$  & $0.016485328$   &$0$\\ 
\hline
$25.99$ & $0.019238169$ &   &$-0.006573103$\\
\hline
\rowcolor{green}$25.28$ & $0.019778481$  & $0.01984127$   &$0$\\
\hline
\rowcolor{green}$21.33$ & $0.023441163$ & $0.023496241$ &$0$\\
\hline
\rowcolor{green}$18.71$ & $0.026723677$  & $0.026766595$   &$0$\\
\hline
\end{tabular}}
\caption{[LEFT] Table containing experimental data for fifth experiment, $A_4$, with associated calculated values. [RIGHT] Table containing experimental data for sixth experiment, $A_5$, with associated calculated values.} \label{table4}
\end{figure}

The second table in \cref{table2} contains the data that was obtained from the mass spectrometer after the experiment for sample $A_1$ (cf.~\bref{alg1}{Main Algorithm}) and implementation of \bref{alg1}{Main Algorithm} for that sample. The first column of the table contains the retention times for the sample. The second column contains the corresponding calculated diffusion coefficients. The third column contains the coefficients of the GUI molecules from sample $A_0$ (used for classification). The last column contains the calculated absorption coefficients. The tables in \cref{table3} and \cref{table4} may be interpreted in a similar manner as the second table of \cref{table2}.

\begin{remark}
As is well understood, manual classification of molecular structures via the mass spectrometer is a time consuming and difficult task. 
However, \bref{alg1}{Main Algorithm} is demonstrated to be able to successfully identify GUI molecules in all experimental samples and also classify the remaining molecular structures using only retention times (after the initial classification of the ``marker'' molecules.
\end{remark}


\section{Conclusions and future endeavors}\label{conclusions}

In this article we developed a novel mathematical method which allows for an efficient classification of experimental samples using a Glucose Unit Index as the reference frame. These interpretations were associated directly to the experimental spectrometer data in particular examples.
In order to develop the novel method, we presented a data-driven partial differential equation model based on modified Einstein paradigm arguments. This extends Einstein's original study of Brownian motion to the situation of a more general conservation law. Once this data-driven model is obtained, we develop a closed-form solution of the model in order to avoid numerical approximations. A simple learning procedure is performed in order to determine the solutions coefficients on an initial sample. These coefficients are then used in additional learning procedures for computing the coefficients associated with unknown samples.

In order to further justify our method, we provide physical interpretations of the model coefficients. These coefficients are shown to be related to the experimental retrieval times, as well. This serves as the basis for the novel algorithm used for data classification (cf.~\bref{alg1}{Main Algorithm}). 
Moreover, the proposed algorithm is successfully implemented and its efficacy is demonstrated via examples. Through the consideration of six independent samples we demonstrate that our method successfully classifies the unknown samples with errors which do not exceed two percent. Moreover, our novel method can be shown to be ten percent more accurate than the traditional method of using retrieval time only for classification.

It is important to mention that the retrieval time and shape of the peak in the spectrometer data depend on three main parameters: drift (velocity), variance (diffusion), and absorption. In this article we demonstrated that if the drift is fixed, due to reprocessing of the data, then the diffusion and absorption coefficients can accurately classify molecules in an unknown sample. Our future endeavors will focus on including all three characteristics for the N-glycan classification. This inclusion will make use of iterative deep learning algorithms of Kolmogorov-type (see, for instance, \cite{Kolomogorov-1,Tikhomirov,Yang}).

%

\bibliographystyle{acm}
\bibliography{data_driven_bib}

\end{document}